\def\BibTeX{{\rm B\kern-.05em{\sc i\kern-.025em b}\kern-.08em
		T\kern-.1667em\lower.7ex\hbox{E}\kern-.125emX}}
\DeclareMathOperator{\tr}{tr}
\DeclareMathOperator{\subjectto}{s.t.}
\DeclareMathOperator{\E}{\mathbb{E}}
\DeclareMathOperator{\Cov}{\mathbb{C}ov}
\DeclareMathOperator{\Prob}{\mathbb{P}}
\DeclareMathOperator{\argmin}{arg\min}
\newcommand{\norm}[1]{\left\lVert#1\right\rVert}
\newtheorem{theorem}{Theorem}
\newtheorem{proposition}{Proposition}
\newtheorem{remark}{Remark}
\begin{document}

\title{Transmission Power Control for Remote State Estimation in Industrial Wireless Sensor Networks}



\author{Samuele~Zoppi,~\IEEEmembership{Student Member,~IEEE,}
	Touraj~Soleymani,~\IEEEmembership{Student Member,~IEEE,}
	Markus~Kl\"ugel,~\IEEEmembership{Student Member,~IEEE,}
	Mikhail~Vilgelm,~\IEEEmembership{Student Member,~IEEE,}
	Sandra~Hirche,~\IEEEmembership{Senior~Member,~IEEE,}
	and~Wolfgang~Kellerer,~\IEEEmembership{Senior~Member,~IEEE}
	\thanks{S. Zoppi, M. Kl\"ugel, M. Vilgelm are with the Chair of Communication Networks, Department of Electrical and Computer Engineering, TUM, Germany, e-mail: \{samuele.zoppi, markus.kluegel, mikhail.vilgelm, kellerer.wolfgang\}\kern-1pt@tum.de.}
	\thanks{T. Soleymani and S. Hirche are with the Chair of Information-Oriented Control, Department of Electrical and Computer Engineering, TUM, Germany, e-mail: \{touraj, sandra.hirche\}\kern-1pt@tum.de.}
	\thanks{Manuscript received July 16, 2019; revised mm dd, 2019.}}



\maketitle
\thispagestyle{empty}

\begin{abstract}
Novel low-power wireless technologies and IoT applications open the door to the Industrial Internet of Things (IIoT).
In this new paradigm, Wireless Sensor Networks (WSNs) must fulfil, despite energy and transmission power limitations, the challenging communication requirements of advanced manufacturing processes and technologies. 
In industrial networks, this is possible thanks to the availability of network infrastructure and the presence of a network coordinator that efficiently allocates the available radio resources.
In this work, we consider a WSN that simultaneously transmits measurements of Networked Control Systems' (NCSs) dynamics to remote state estimators over a shared packet-erasure channel.
We develop a minimum transmission power control (TPC) policy for the coordination of the wireless medium by formulating an infinite horizon Markov decision process (MDP) optimization problem.
We compute the policy using an approximate value iteration algorithm and provide an extensive evaluation of its parameters in different interference scenarios and NCSs dynamics.
The evaluation results present a comprehensive characterization of the algorithm's performance, proving that it can flexibly adapt to arbitrary use cases.
\end{abstract}
\begin{IEEEkeywords}
IIoT, WSN, Transmission Power Control, Remote State Estimation, NCS, MDP.
\end{IEEEkeywords}

\section{Introduction}\label{sec:introduction}
In the last few years, we have observed a paradigm shift for Wireless Sensor Networks (WSNs) from monitoring applications to industrial automation processes.
Industrial WSNs are foreseen to be integrated into advanced manufacturing techniques enabling the Industrial Internet of Things (IIoT). 
They provide the necessary communication infrastructure for sensors and actuators to wirelessly operate in closed-loop control systems
called Networked Control Systems (NCSs). 

In state-of-the-art NCSs, state estimation is embedded in the design of closed-loop control policies~\cite{stoccontrol}.
In particular, upon receiving the sensor's measurement, a remote state estimator calculates the Minimum Mean Square Error (MMSE) estimate of the system's state, which is then used by the controller to compute the actuation command.

Although wireless networks bring new capabilities and flexibility to NCSs, communication delays and packet dropouts highly affect the control performance and must be kept under control~\cite{Zhang2013}.
This is particularly critical for WSN devices that adopt low-power communication and are energy-constrained.
Therefore, in the IIoT, the available network resources have to be carefully coordinated to maximize the lifetime of the network while satisfying the communication requirements of the application.

\begin{figure}
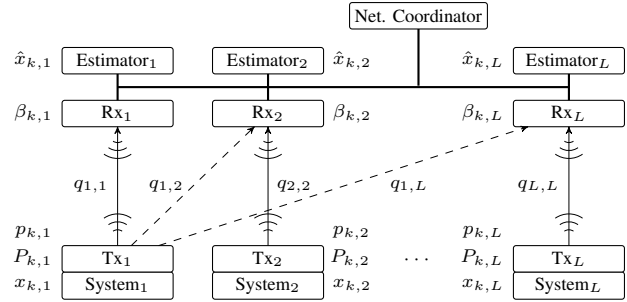

	\includestandalone{images/system-architecture}
	\caption{System model of an industrial WSN deployed for the remote state estimation of multiple dynamical systems.}
	\label{fig:system-architecture}
\end{figure}

In this work, we consider an industrial WSN deployed to convey measurements of multiple independent NCSs' dynamics to remote state estimators as in Fig.~\ref{fig:system-architecture}.
We study the impact of packet dropouts arising from the simultaneous transmission of sensor measurements over a shared communication channel.
For successful simultaneous communication, the transmission powers of the sensors need to be coordinated to control the interference levels in the wireless medium.
In industrial WSNs this is possible thanks to the availability of a network infrastructure that allows the centralized coordination of the communication resources~\cite{Gungor2009}.
The objective of the network coordinator is to control the transmission powers of the sensors to ensure the correct operation of the remote state estimators while minimizing the energy expenditure of the network.

For this reason, in this work, we propose a method to determine the minimum transmission power control (TPC) policy for a WSN deployed for remote state estimation of NCSs' dynamics.
We achieve this by (i) formulating an infinite horizon Markov decision process (MDP) optimization problem,
(ii) using an approximate value iteration algorithm to solve it, and (iii) exhaustively evaluating it in different interference scenarios and configurations.

The remainder of this paper is structured as follows. 
In Sec.~\ref{sec:related-work} the related work is discussed.
Sec.~\ref{sec:tpc-problem} details the propagation model of the WSN and the Foschini-Miljanic TPC algorithm, while Sec.~\ref{sec:remote-state-estimation} describes dynamical systems' model and the remote state estimation procedure.
Furthermore, Sec.~\ref{sec:opt-problem} formulates the proposed optimization problem, Sec.~\ref{sec:opt-algorithm} describes its solution via an approximate algorithm, and Sec.~\ref{sec:evaluation} presents its comprehensive evaluation.
Finally, in Sec.~\ref{sec:conclusions}, conclusions are drawn and future work is discussed.

\subsection{Related Work}\label{sec:related-work}
The investigation of optimal state estimation of dynamical systems with intermittent measurements started approximately a decade ago~\cite{Sinopoli2004}.
Thanks to the suitability of WSNs for IIoT applications, packet dropouts as the cause of intermittent measurement reporting have been largely investigated.
Initially, the trade-off between transmission power and packet loss for remote state estimation of a single-sensor has been studied in 
\cite{Ren2018,Wu2015,Wu2017,Soleymani2018,Li2013,Li2017}.
Solutions are presented for a noiseless sensor~\cite{Ren2018}, and for noisy measurements where the transmission power policy is computed offline~\cite{Wu2015,Wu2017,Soleymani2018}, online~\cite{Li2013}, and in combination with an energy harvester~\cite{Li2017}.

Additional research studies have investigated the scenario of multiple sensors sharing the same communication channel~\cite{Li2014,Weerakkody2016,Wu2018,Zhang2017,Li2019a,Li2018,Li2019}.
In particular, Li, Weerakkody, and Wu et al.~\cite{Li2014,Weerakkody2016,Wu2018} studied the optimal scheduling of sensor transmissions in TDMA and CSMA medium access schemes for event-based and periodic measurements.
When the sensors simultaneously access the wireless medium, transmission power control schemes have been investigated for independent estimators 
\cite{Zhang2017,Li2019a} and in the context of sensor fusion~\cite{Li2018,Li2019}.
Existing TPC solutions for multiple independent estimators investigate distributed game-theoretical methods, that rely on partial information, require convergence time, and are tailored to legacy uncoordinated WSNs.
The investigation of optimal allocation of transmission powers in WSNs with network infrastructure still remains open.
In our IIoT scenario, we make use of a centralized network controller and complete channel information to design an optimal offline TPC policy that does not require online convergence.

On the other hand, communication research devoted a remarkable effort over the last decades to optimally coordinate transmission powers, in particular in the context of cellular networks \cite{Chiang2007,Douros2011}.
Results show that the TPC problem can be both formulated as centralized or distributed problem thanks to the definition of feasible communication links according to the coupling introduced by the signal-to-interference-and-noise ratio (SINR)~\cite{Foschini1993}.
In this direction, research works have investigated the problems of scheduling~\cite{Borbash2006} and energy efficiency in multi-hop ad-hoc networks~\cite{Cruz2004}.

Most of existing TPC algorithms for wireless ad-hoc networks assume a lack of network infrastructure and diverse objectives for the sensors in the network~\cite{JianweiHuangMemberIEEERandallA.BerryMemberIEEEandMichaelL.HonigFellow2006,St.Jean2005,Long2007,Sengupta2010}.
In particular, cooperative~\cite{JianweiHuangMemberIEEERandallA.BerryMemberIEEEandMichaelL.HonigFellow2006} and non-cooperative~\cite{St.Jean2005,Long2007,Sengupta2010} game-theoretical solutions have been developed tackling traditional communication objectives such as energy efficiency~\cite{Long2007}, throughput~\cite{St.Jean2005}, or an arbitrary utility~\cite{JianweiHuangMemberIEEERandallA.BerryMemberIEEEandMichaelL.HonigFellow2006,Sengupta2010}.
A centralized TPC strategy for ad-hoc networks has been investigated that jointly optimizes throughput, delay and power consumption~\cite{Li2007}.

Although TPC techniques have been extensively studied in communication networks, no research work tackles the problem of coordinating transmission powers for multiple remote state estimators of NCSs' dynamics.
Existing work focuses on standard communication metrics or arbitrary convex utilities that cannot be related to the problem of remote state estimation.
Furthermore, existing ad-hoc network solutions do not consider the IIoT capability of providing network infrastructure, that opens new opportunities in the coordination of WSNs.

\section{System Model and Background}\label{sec:problem-formulation}
We consider an Industrial IoT scenario where $L$ LTI dynamical systems operate in an indoor environment as represented in Fig.~\ref{fig:system-architecture}.
Every system is equipped with a wireless sensor that samples its state and transmits it to a remote state estimator 
using a one-step delay packet-erasure channel with acknowledgement.
The WSN consists of $2L$ sensors adopting the IEEE Std. 802.15.4 physical layer~\cite{802154-2006} transmitting on the same channel according to the TDMA medium access control IEEE Std. 802.15.4e\cite{802154-2012}. 
Upon receiving the sensor's measurement, a remote state estimator calculates the Minimum Mean Square Error (MMSE) estimates of the systems, which are then used by the controller to compute the actuation commands.

This section provides the system model and background of the WSN and the remote state estimation of dynamical systems.
Sec.~\ref{sec:tpc-problem} defines the propagation model of the WSN and the Foschini-Miljanic TPC algorithm, while Sec.~\ref{sec:remote-state-estimation} details the dynamical systems' model and the MMSE remote state estimation procedure. 
Tab.~\ref{tab:list-of-symbols} summarizes the overall system model parameters.

\begin{table}
	\small
	\begin{tabular}{p{1.2cm} p{6.6cm}}
		\toprule
		Symbol & Description \\
		\midrule
		$q_{\ell,m}$& Channel coefficient.\\
		$\zeta_{\ell,m}$& Path loss coefficient.\\
		$d_{\ell,m}$& Relative communication distance.\\
		$\eta$& Path loss decay.\\
		$f$& Transmission frequency.\\
		$\mu_\xi$& Average multi-path fading.\\
		$\sigma^2$& Variance of the multi-path fading's logarithm.\\
		$\gamma_{k,\ell}$& Signal-to-interference-and-noise ratio.\\
		$p_{k,\ell}$& Transmission power.\\
		$n_{k,\ell}$& Power of the additive white Gaussian noise.\\
		$\beta_{k,\ell}$& Random packet-erasure process.\\
		$\kappa_{k,\ell}$& Average Packet Success Ratio.\\
		$W$& Packet length.\\
		\midrule
		$L$& Total number of dynamical systems.\\
		$x_{k,\ell}$& Instantaneous system's state.\\
		$F_{\ell}$& State matrix. \\
		$v_{k,\ell}$& State noise.\\
		$y_{k,\ell}$& System's observation.\\ 
		$H_{\ell}$& Output matrix.\\
		$w_{k,\ell}$& Measurement noise.\\
		$R_{1,\ell}, R_{2,\ell}$& Cov. matrices of the state and measurement noises.\\
		$\mathcal{I}_{k,\ell}$& Estimator's available information.\\
		$\Phi_{\ell}$& System's estimation distortion.\\
		$\hat{x}_{k,\ell}$& Estimated system's state.\\
		$P_{k,\ell}$& Estimation error covariance.\\
		$K_{k,\ell}$& Kalman Filter's gain.\\
		\bottomrule
	\end{tabular}
	\caption{List of system model parameters.}
	\label{tab:list-of-symbols}
\end{table}

\subsection{Wireless Propagation and Transmission Power Control}\label{sec:tpc-problem}
The wireless propagation between the $m$-th transmitter and the $\ell$-th receiver is modelled by the channel coefficient $q_{\ell,m}$, which captures the average fraction of received power after path loss and multi-path fading. 
The first is modelled by a logarithmic path loss model, while the second by a log-normal random variable suitable for low-power indoor WSN communication~\cite{akyildiz2010wireless,zamalloa2007analysis}
\begin{align}
q_{\ell,m}&=\zeta_{\ell,m}\, {\mu_\xi}^{-1}\mbox{, }\quad q_{\ell,m}<1\mbox{,} \label{eq:channel-coefficient}\\
\zeta_{\ell,m} &= \left(\frac{c_0}{4\pi f d_0}\right)^2 \left(\frac{d_0}{d_{\ell,m}}\right)^\eta \mbox{,}\label{eq:path-loss}\\ 
\mu_\xi &= \E\left[\xi\right] = e^{\sigma^2 / 2} \mbox{, }\quad \ln\left(\xi\right)\sim \mathcal{N}\left(0, \sigma^2\right) .\label{eq:shadowing}
\end{align}
Where, $\mu_\xi$ is the expected value of the log-normal stochastic variable $\xi$ with parameter $\sigma^2$. 
The path loss component $\zeta_{\ell,m}$ depends on the speed of light $c_0$, the operating frequency $f$, the path loss decay $\eta$, and the relative distance between the transmitter and the receiver $d_{m,\ell}$ normalized by a reference distance $d_0$.

All sensors share the same wireless channel for the transmission of sensor values to their respective remote state estimators.
At every time step $k$, the communication quality of the $\ell$-th link is affected by the simultaneous transmissions of the other $L-1$ sensors and is described by the signal-to-interference-and-noise ratio (SINR)
\begin{equation}\label{eq:sinr}
\gamma_{k,\ell} = \frac{p_{k,\ell} \, q_{\ell,\ell}}{\sum_{m\neq \ell}p_{k,m} \, q_{m,\ell} + n_{k,\ell}},
\end{equation}
where $p_{k,\ell}$ is the transmission power of the $\ell$-th transmitter and $n_{k,\ell}$ the power of the additive white Gaussian noise. 
A specific SINR value is mapped to a Packet Success Ratio (PSR) value according to the OQPSK modulation with DSSS (IEEE Std. 802.15.4~\cite{802154-2006}) and CRC recovery mechanism.
Therefore, the average PSR of a packet of $W$ bits transmitted over the $\ell$-th link is given by
\begin{align}\label{eq:psr-vs-sinr}
\kappa_{k,\ell} = f\left(\gamma_{k,\ell}\right) = \left[1 - Q\left(4\sqrt{\gamma_{k,\ell}} \right)\right]^{W},
\end{align}
where $Q(\cdot)$ denotes the standard Gaussian error function.

The packet-erasure process of the $\ell$-th lossy communication channel is modelled by a Bernoulli random variable $\beta_{k,\ell}$ with mean equal to the PSR $\kappa_{k,\ell}$, i.e,
\begin{align}
\Pr\left[\beta_{k,\ell} = b \right] &= \left\{
\begin{array}{l l}
1 & \ \text{with probability } \kappa_{k,\ell}, \\
0 & \ \text{otherwise}.
\end{array} \right.\label{eq:packet-erasure-process}
\end{align}

The simultaneous transmission of multiple sensors on the same channel generates interference that reduces the SINR and creates packet loss. 
In order to achieve the desired quality for all the communication links, the transmission powers need to be adjusted accordingly. 
The problem of optimal allocation of transmission powers for multiple interfering transmitters is tackled by the Foschini-Miljanic algorithm~\cite{Foschini1993}.
Given the network PSR requirements and wireless propagation parameters, the algorithm computes the minimum transmission powers by solving the system of equations arising from~\eqref{eq:sinr} which describes the SINR coupling of the network.

In this work, we apply the Foschini-Miljanic algorithm~\cite{Foschini1993} to coordinate the simultaneous transmission of measurements.
Given the PSR requirements of the network $\vec{\kappa}_k = \left[\kappa_{k,1}, \dots, \kappa_{k,L}\right]$, it is possible to calculate the corresponding SINR vector $\vec{\gamma}_k^{\,\kappa} = \left[\gamma_{k,1}^\kappa, \dots, \gamma_{k,L}^\kappa\right]^T$ using ~\eqref{eq:psr-vs-sinr}
\begin{equation}\label{eq:snr-from-psr-requirement}
\gamma_{k,i}^\kappa = f^{-1}\left(\kappa_{k,i}\right).
\end{equation}

The vector of the minimum transmission powers satisfying the SINR requirements of the network is~\cite{Chiang2007}
\begin{align}
	\vec{p}_{k}^{\,\kappa} &= g\left(\vec{\gamma}_k^{\,\kappa}\right) = {\left( I_L - D\left(\vec{\gamma}_k^{\,\kappa}\right)T \right)}^{-1} \vec{u}_k,\label{eq:foschini}\\
	\vec{\gamma}_k^{\,\kappa} &\in \Lambda_\gamma \triangleq \left\{ \vec{\gamma}_k^{\,\kappa} \ge 0: 0 \leq g\left(\vec{\gamma}_k^{\,\kappa}\right)\leq p_\text{max} \right\}. \label{eq:feasibility-sinr}
\end{align}
Where $I_L$ is the identity matrix, $D\left(\vec{\gamma}^\kappa_k\right)$ is a diagonal matrix with SINR requirements $\vec{\gamma}^\kappa_k$ on the main diagonal, $T$ is the normalized-gain matrix~\eqref{eq:normalized-gain-matrix}, $\vec{u}_k$ the normalized interference vector~\eqref{eq:normalized-interference-vector}, and $\Lambda_\gamma$ is  the feasibility region, i.e. the space of feasible SINR requirements which depends on the maximum transmission power $p_\text{max}$.
\begin{align}
T &= 
\begin{bmatrix}
0 & q_{1,2}/q_{2,2} & \dots & q_{1,L}/q_{L,L} \\
q_{2,1}/q_{1,1} & 0 & \dots & q_{2,L}/q_{L,L} \\
\dots  & \dots & \dots & \dots \\
q_{L,1}/q_{1,1} & q_{L,2}/q_{2,2} & \dots & 0
\end{bmatrix} ,\label{eq:normalized-gain-matrix}\\
\vec{u}_k &= \left[ \frac{n_{k,1}\gamma_1^\kappa}{q_{1,1}},\frac{n_{k,2}\gamma_2^\kappa }{q_{2,2}}, \dots, \frac{n_{k,L}\gamma_L^\kappa}{q_{L,L}} \right]^\top\label{eq:normalized-interference-vector},\\
\vec{n}_k &= \left[ n_{k,1}, n_{k,2}, \dots, n_{k,L}\right]^\top.
\end{align}

\begin{figure}[t!]
	\includegraphics{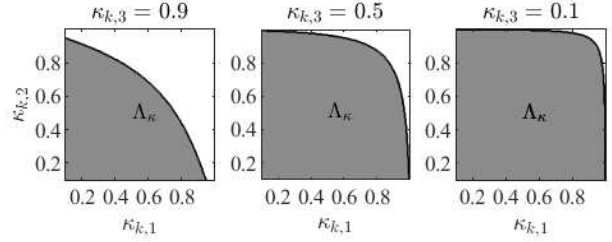}
	\caption{Exemplary PSR feasibility region $\Lambda_\kappa$ of 3 sensors with equal distances, for sensors $1$ and $2$ ($\kappa_{k,1},\kappa_{k,2}$) as the PSR req. of sensor $3$ ($\kappa_{k,3}$) decreases (left to right: 0.1, 0.5, 0.9).}
	\label{fig:psr-feasibility-region}
\end{figure}

Combining~\eqref{eq:psr-vs-sinr} and~\eqref{eq:foschini}, the feasibility region $\Lambda_\gamma$ can be expressed in terms of PSR requirements as
\begin{align}
\vec{\kappa}_k \in \Lambda_\kappa &= \left\{ \vec{\kappa}_k\ge 0: 0 \leq\vec{\Psi}(\vec{\kappa}_k) \leq p_{max} \right\},\\
\vec{\Psi}(\vec{\kappa}_k) &= g\left( f^{-1} \left(\vec{\kappa}_k\right)\right).
\end{align}
Fig.~\ref{fig:psr-feasibility-region} shows the PSR feasibility region  $\Lambda_\kappa$ for an exemplary case of $3$ sensors. 
Each diagram shows the pair of feasible PSR requirements of sensor $1$ and $2$ for different values of $\kappa_{k,3}$, i.e. $0.1,0.5,$ and $0.9$.
In this scenario, the distances between all the transmitters and the receivers are the same.
This way, the bi-dimensional feasibility regions are symmetric and the feasibility regions of sensors $2,3$ and $1,3$ are equivalent to the ones shown in Fig.~\ref{fig:psr-feasibility-region}.
We notice that, due to the coupling of the channels, whenever the PSR requirement of a sensor is more stringent ($\kappa_{k,3}=0.9$), the feasible PSR requirements of the other users are restricted.
In our study, the feasibility region varies according the position the propagation parameters of all the sensors and represents the set of possible power and PSR allocations of the WSN.

\subsection{Dynamical Systems and Remote State Estimation}\label{sec:remote-state-estimation}
The discrete-time dynamics of the $\ell$-th system are generated by the following linear state equation
\begin{align}\label{eq:system-dynamics}
x_{k+1,\ell} &= F_\ell \, x_{k,\ell} +  w_{k,\ell},
\end{align}
for time $k \in \mathbb{N}_+$ and with initial condition $x_{0,\ell}$ where $x_{k,\ell} \in \mathbb{R}^n$ is the state of the system, $F_\ell$ is the state matrix, $w_{k,\ell} \in \mathbb{R}^n$ is a white Gaussian noise process with zero mean and covariance $R_{1,\ell}$ where $R_{1,\ell} \succ 0, \forall \ell$. 
At each time step, the state $x_{k,\ell}$ is observed by a sensor whose measurement is
\begin{align}\label{eq:system-observations}
y_{k,\ell} &= H_\ell \, x_{k,\ell} + v_{k,\ell},
\end{align}
where $y_{k,\ell} \in \mathbb{R}^p$ is the measurement of the system, $H_\ell$ is the output matrix, and $v_{k,\ell} \in \mathbb{R}^p$ is a white Gaussian noise process with zero mean and covariance $R_{2,\ell}$ where $R_{2,\ell} \succ 0, \forall \ell$. 
It is assumed that the initial state $x_{0,\ell}$ is a Gaussian vector with mean $m_{0,\ell}$ and covariance $R_{0,\ell}$, that $x_{0,\ell}$, $w_{k,\ell}$, $v_{k,\ell}$ are mutually independent, and that $\left(F_\ell,H_\ell\right)$ is observable $\forall \ell$.

The quality of estimation depends on the information available at the estimator.
At time $k$, for the $\ell$-th control system, we assume that the information available at the estimator is specified by
\begin{align}\label{eq:available-information-at-controller}
\mathcal{I}_{k,\ell} = \left\{ \beta_{0:k-1,\ell}, {y}_{0:k-1,\ell} \right\}.
\end{align}

We measure the estimation distortion of the $\ell$-th system by the quadratic function
\begin{equation}\label{eq:estimation-distortion-system-l}
\Phi_{\ell} = \E \left[ \sum_{k=0}^{N} \left\| x_{k,\ell} - \hat{x}_{k,\ell} \right\|_{\Theta_{k,\ell}}^2 \right],
\end{equation}
where $\Theta_{k,\ell} \succ 0$ is a weighting matrix and $\hat{x}_{k,\ell}$ is the best state estimate given the information set $\mathcal{I}_{k,\ell}$.

As shown in~\cite{Soleymani2018}, the conditional expected value of the state 
is the minimizer of the mean square error for the system in~\eqref{eq:system-dynamics} and~\eqref{eq:system-observations} over the one-step delay packet-erasure channel with arrival process specified by~\eqref{eq:packet-erasure-process}.
Hence, we can obtain
\begin{subequations}\label{eqs:kalman}
	\begin{align}
	\hat{x}_{k+1,\ell} &= F_\ell \hat{x}_{k,\ell} + \beta_{k,\ell}  K_{k,\ell} (y_{k,\ell}\!-\!H_\ell \hat{x}_{k,\ell}),\\
	P_{k+1,\ell} &= F_\ell P_{k,\ell} F_\ell^\top + R_{1,\ell} - \beta_{k,\ell} K_{k\,\ell} H_\ell P_{k,\ell} F_\ell^\top,
	\end{align}
	where $\hat{x}_{k,\ell} = \E[{x}_{k,\ell} | \mathcal{I}_{k,\ell}]$, $P_{k,\ell} = \Cov[{x}_{k,\ell} | \mathcal{I}_{k,\ell}]$, and
	\begin{align}
	K_{k,\ell} &= F_\ell P_{k,\ell} H_\ell^\top \left(H_\ell P_{k,\ell} H_\ell^\top + R_{2,\ell}\right)^{-1},
	\end{align}
	with initial conditions $\hat{x}_{0,\ell} = m_{0,\ell}$ and $P_{0,\ell} = R_{0,\ell}$, $\forall \ell$.
\end{subequations}
\section{Optimal Transmission Power Allocation Problem}\label{sec:opt-problem}

We are interested in finding the minimum transmission powers required for satisfying the desired level of estimation distortion in each remote state estimator.

Let $\vec{p}_k=\left[ p_{k,1}, \dots, p_{k,L}\right]^\top \in \mathcal{P} \, \forall k$ represent all the possible transmission policies and $\mathcal{P} = \left\lbrace \vec{p}_k : 0 \le \vec{p}_k \le p_\text{max}  \right\rbrace$ be the set of feasible transmission powers.
The target is to find $\vec{p}_k \in \mathcal{P}$ that minimizes
\begin{align}\label{eq:opt-relaxed}
\inf_{\vec{p}_k \in \mathcal{P}} \lim_{N\to\infty} \E \left[ \sum_{\ell=1}^L \sum_{k=0}^N  \alpha^k p_{k,\ell} + \alpha^k\lambda_\ell \norm{ x_{k,\ell}  -\hat{x}_{k,\ell}}_{\Theta_{k,\ell}}^2 \right],
\end{align}
where $\lambda_\ell$ regulates the trade-off between the estimation distortion of a single remote state estimator and the transmission power of the sensor, and $\alpha \in \left(0,1\right)$ is the \emph{discount factor} which weights the relative contribution of the costs in the short-term and long-term
future. 

We concentrate on the estimation error covariance $P_{k,\ell}$ and use the identity $\E \left[\norm{ x_{k,\ell}  -\hat{x}_{k,\ell}}_{\Theta_{k,\ell}}^2\right]= \E \left[\tr(\Theta_{k,\ell} {P}_{k,\ell})\right]$ to reformulate the estimation error~\cite{Soleymani2018}.
Hence, the optimal transmit power schedule is obtained by solving the following optimization problem
\begin{alignat}{2}\label{eq:opt_prob1}
&\min_{\vec{p}_k \in \mathcal{P}}  && \lim_{N\to\infty} \E \left[ \sum_{\ell=1}^{L} \sum_{k=0}^{N} \alpha^k p_{k,\ell} + \alpha^k \lambda_\ell \tr(\Theta_{k,\ell} {P}_{k,\ell}) \right], \notag\\
& \subjectto && P_{k+1,\ell} = F_\ell P_{k,\ell} F_\ell^\top + R_{1,\ell} - \beta_{k,\ell} K_{k,\ell} H_\ell P_{k,\ell} F_{\ell}^\top,\notag\\
&  &&  \vec{p}_k= \vec{\Psi}(\vec{\kappa}_k),
\end{alignat}
with initial conditions $P_{0,\ell}=R_{0,\ell}, \forall \ell$.

Note that it is possible to separate the design of the optimal estimation distortion from the allocation of the optimal transmission powers.
Although the systems are coupled via the shared communication channel, the optimal allocation of powers given the coupling is provided by the Foschini-Miljanic algorithm for feasible PSR requirements.
Therefore, the optimization problem can be equivalently expressed in terms of optimal PSR requirement.

\begin{theorem}
\label{th:separation}
There is a separation between designs of the optimal estimate and the optimal power schedule.
\end{theorem}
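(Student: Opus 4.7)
The plan is to establish the separation by decoupling the estimator design from the power-schedule design through a three-step argument. First, I would show that, for any fixed information set $\mathcal{I}_{k,\ell}$ and weighting matrix $\Theta_{k,\ell} \succ 0$, the minimizer of $\E\left[\norm{x_{k,\ell}-\hat{x}_{k,\ell}}_{\Theta_{k,\ell}}^2 \mid \mathcal{I}_{k,\ell}\right]$ is the conditional mean $\hat{x}_{k,\ell} = \E[x_{k,\ell} \mid \mathcal{I}_{k,\ell}]$. Under the linear--Gaussian dynamics~\eqref{eq:system-dynamics}, the observation model~\eqref{eq:system-observations}, and the one-step-delayed erasure channel~\eqref{eq:packet-erasure-process}, this conditional expectation is generated by the Kalman-type recursion~\eqref{eqs:kalman}, irrespective of how the sequence $\{\beta_{k,\ell}\}$ was produced. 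Hence the functional form of the optimal estimator does not depend on the chosen power policy $\vec{p}_k$; only the realizations of $\beta_{k,\ell}$ entering the recursion do.

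Next, I would exploit the identity $\E\left[\norm{x_{k,\ell}-\hat{x}_{k,\ell}}_{\Theta_{k,\ell}}^2\right] = \E\left[\tr(\Theta_{k,\ell} P_{k,\ell})\right]$ already invoked in~\eqref{eq:opt_prob1}. The $P_{k,\ell}$-recursion depends on the powers only through the Bernoulli variables $\beta_{k,\ell}$, whose law is fully parametrised by $\kappa_{k,\ell}$ and hence by the SINR via~\eqref{eq:psr-vs-sinr}. Thus both the expected distortion and the erasure statistics can be written as functions of $\vec{\kappa}_k$ alone; the sole remaining role of the power schedule is to realise a feasible PSR vector.

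Third, I would invoke the Foschini--Miljanic result recalled in~\eqref{eq:foschini}: for any $\vec{\kappa}_k \in \Lambda_\kappa$, $\vec{\Psi}(\vec{\kappa}_k)$ is the componentwise minimum over all $\vec{p}_k \in \mathcal{P}$ that achieve PSR at least $\vec{\kappa}_k$. Because the running cost in~\eqref{eq:opt_prob1} is strictly increasing in each component of $\vec{p}_k$, while the distortion term depends on $\vec{p}_k$ only through $\vec{\kappa}_k$, no feasible $\vec{p}_k \neq \vec{\Psi}(\vec{\kappa}_k)$ delivering the same PSRs can lower the objective. Consequently the infimum in~\eqref{eq:opt_prob1} may be taken first over $\vec{\kappa}_k \in \Lambda_\kappa$ with $\vec{p}_k$ slaved to $\vec{\Psi}(\vec{\kappa}_k)$, so that the estimator design (Kalman filter) and the power-schedule design (Foschini--Miljanic mapping composed with the optimal PSR schedule) genuinely separate.

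The step I expect to be hardest is tightening the monotonicity argument in the third stage: one must rule out that perturbing some $p_{k,m}$ away from $\vec{\Psi}(\vec{\kappa}_k)$ could, via the coupled SINRs in~\eqref{eq:sinr}, lower a future $P_{k',\ell}$ enough to beat the corresponding extra power cost. This reduces to showing that $\kappa_{k,\ell}$ is monotonically non-decreasing in $p_{k,\ell}$ and non-increasing in each $p_{k,m}$ with $m\neq\ell$, and that the Riccati-like recursion for $P_{k,\ell}$ is monotone in $\kappa_{k,\ell}$; once these monotonicities are made precise, the componentwise minimality of $\vec{\Psi}(\vec{\kappa}_k)$ implies optimality and the separation claim follows.
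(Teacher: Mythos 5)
Your proposal is correct and follows essentially the same route as the paper: the MMSE estimate is the conditional mean generated by the Kalman-type recursion~\eqref{eqs:kalman} independently of the power policy, and the power schedule reduces to optimizing over PSRs with the Foschini--Miljanic map supplying the minimal powers. In fact your argument is considerably more complete than the paper's two-sentence proof, which simply points back to the preceding derivations and imposes $\vec{p}_k=\vec{\Psi}(\vec{\kappa}_k)$ as a constraint without justifying (as you do via the monotonicity of the PSR in the powers and of the covariance recursion in the PSR) that this restriction loses no optimality.
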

\begin{proof}
The proof follows the above derivations. The optimal estimate is obtained by the recursive filter given in~\eqref{eqs:kalman} and the optimal power schedule is obtained by solving the optimization problem in~\eqref{eq:opt_prob1}.
\end{proof}

By using the separation property, we can obtain an equivalent optimization problem in terms of $\vec{\kappa}_k$, where 
\begin{IEEEeqnarray}{l}
\min_{\vec{\kappa}_k \in \Lambda_\kappa} \lim_{N\to\infty} \E \left[ \sum_{\ell=1}^{L} \sum_{k=0}^{N} \alpha^k \vec{\Psi}_\ell \left(\vec{\kappa}_k\right) + \alpha^k \lambda_\ell \tr(\Theta_{k,\ell} {P}_{k,\ell})\right], \nonumber\\
\subjectto\, P_{k+1,\ell} = F_\ell P_{k,\ell} F_\ell^\top + R_{1,\ell} - \beta_{k,\ell} K_{k,\ell} H_\ell P_{k,\ell} F_{\ell}^\top.
\end{IEEEeqnarray}

The problem above can be viewed as an infinite horizon Markov decision process (MDP), where the goal is to find an optimal steady-state PSR policy that corresponds to the minimum transmission power allocation of the Foschini-Miljanic algorithm. The MDP has a state space of all possible covariance combinations and an action space determined by the PSR feasibility region $\Lambda_{\kappa}$ of the transmission power control problem of Sec.~\ref{sec:tpc-problem}.

At time $k$, we denote the \emph{action} of the system as $\vec{\kappa_k}$ and its \emph{state} as $S_k\triangleq\left[P_{k,1},\dots,P_{k,L}\right]^\top$, where $S_0$ denotes the initial state. 

The stage cost is defined as
\begin{align}
\rho \left( S_{k}, \vec{\kappa}_k\right) &\triangleq \sum_{\ell=1}^{L} \vec{\Psi}_\ell(\vec{\kappa}_k) + \lambda_\ell \tr(\Theta_{k,\ell} {P}_{k,\ell}).
\end{align}

Given the random outcome vector 
$\vec{\beta}_k=\left[\beta_{k,1},\dots,\beta_{k,L}\right]^\top$, we can define a covariance transition function $\phi_l: P_{k,\ell}\to P_{k+1,\ell}$ and the state transition function $\Phi: S_k\rightarrow S_{k+1}$ as follows
\begin{align}
\phi_\ell\left(P_{k},
\beta_k\right) &\triangleq F_\ell P_{k,\ell} F_\ell^\top + R_{1,\ell} - \beta_{k,\ell} K_{k,\ell} H_\ell P_{k,\ell} F_{\ell}^\top,\\
\Phi\left( S_{k},
\vec{\beta}_k\right) &\triangleq \left[\phi_1(P_{k,1},\beta_{k,1}),\dots,\phi_L(P_{k,L},\beta_{k,L})\right]\top.
\end{align}

The optimization problem in~\eqref{eq:opt_prob1} can be concisely rewritten as
\begin{subequations}
\label{eq:final_opt}
\begin{align}
\min_{\vec{\kappa}_k \in \Lambda_k}& \lim_{N\to\infty} \E \left[ \sum_{k=0}^{N}\alpha^k \rho \left( S_{k}, \vec{\kappa}_k \right) \right]\\ 
\text{ s.t. }& S_{k+1} = \Phi \left( S_{k}, \vec{\beta}_k\right),\\
&\vec{\kappa}_{k} = \E\left[\vec{\beta}_{k}\right],
\end{align}
\end{subequations}
with initial condition $S_{0}= \left[R_{0,1},\dots,R_{0,L}\right]^\top$ and $\alpha \in \left(0,1\right)$.
\section{Approximate Value Iteration Algorithm}\label{sec:opt-algorithm}
\begin{figure*}[t]
	\includegraphics{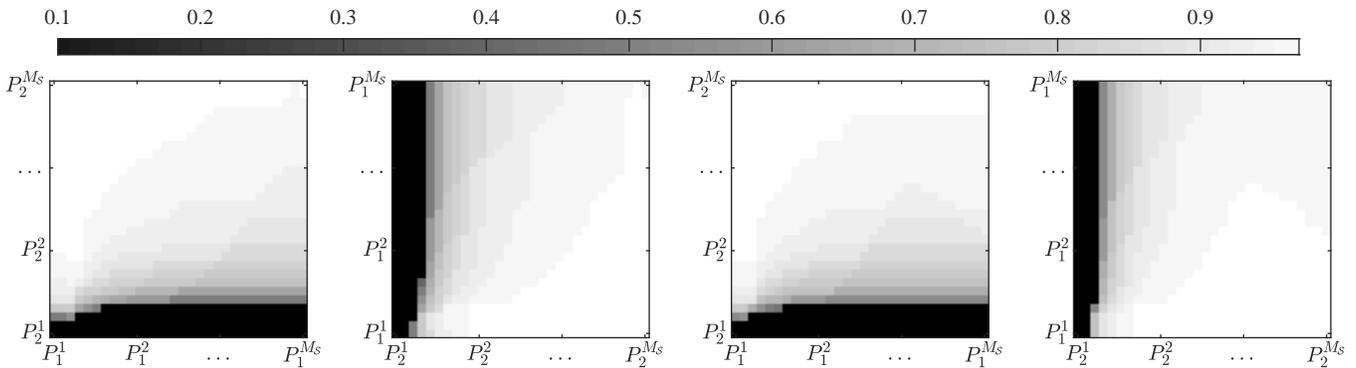}
	
	\caption{Visualization of exemplary PSR policies of two sensors ($L=2$) with equal communication distances ($10$ m) and discretization $M_\mathcal{S}=M_{\bar{\Lambda}}=P_1^{M_\mathcal{S}}=P_2^{M_\mathcal{S}}=30$. 
	The left pair of plots shows the PSR policies of two homogeneous systems $F_1=F_2=1.01$, while the right pair shows those for heterogeneous systems $F_1=1.01, F_2=1.2$.}
	\label{fig:psr-schedules}
\end{figure*}

\begin{table}
	\small
	\begin{tabular}{p{1.2cm} p{6.6cm}}
	\toprule
	Symbol & Description \\
	\midrule
	$\bar{\Lambda}, M_{\bar{\Lambda}}$&  Discretized action space and its cardinality.\\
	$\mathcal{S}, M_\mathcal{S}$&  Discretized state space and its cardinality.\\
	$\rho$& Stage cost function.\\
	$J$& Target update function.\\
	$\phi_\ell$&  Estimation error  covariance function.\\
	$\Phi$&  State transition function.\\
	$\lambda_\ell$& Estimation error to transmission power trade-off.\\
	$\alpha$& Discount factor.\\
	$\epsilon$& Algorithm's precision.\\
	\bottomrule
	\end{tabular}
	\caption{List of the value iteration algorithm's parameters.}
	\label{tab:list-of-algorithm-symbols}
\end{table}

In this section, we develop an approximate algorithm to solve the optimization problem in~\eqref{eq:final_opt}.

Assuming that the distribution of PSR is stationary within the horizon $N$, and hence its expectation and PSR feasibility region are constant within a sufficiently large horizon, the problem in~\eqref{eq:final_opt} is an infinite horizon discounted cost problem defined on an MDP. 
This problem can be solved by dynamic programming algorithms such as value iteration or policy iteration. 
These algorithms are guaranteed to converge to the globally optimal stationary policy~\cite{Bertsekas_DP}.

However, both state space and action space are continuous. 
Therefore, we need to apply an approximate algorithm with discretized state and action space\footnote{In practice, possible allocated powers and thus action space is discrete due to implementation or standard specifications~\cite{802154-2006}.}.
At every time step (we further omit the time-step index for readability), we define a discretized system state space $\mathcal{S}$ where the covariance space of each system is discretized with $M_{\mathcal{S}}$ levels
\begin{align}
&S\in \mathcal{S},\, S=\left[P_1,\dots,P_L\right]^\top \text{ where}\nonumber\\
&P_\ell\in \{P_{\ell}^{1},\dots,P_{\ell}^{M_\mathcal{S}}\}\quad\forall \ell\in\{1,\dots,L\}.\nonumber
\end{align}

Similarly, we define a discretized action space $\bar{\Lambda}$ as all combinations of PSRs belonging to the discretized feasibility region of cardinality $|\bar{\Lambda}|\triangleq M_{\bar{\Lambda}}$
\begin{align}
&\vec{\kappa}\in \bar{\Lambda},\, \vec{\kappa}=\left[\kappa_1,\dots,\kappa_L\right]^\top \text{ where}\nonumber\\
&\kappa_\ell\in \{\kappa_{\ell}^{1},\dots,\kappa_{\ell}^{M_{\bar{\Lambda}}}\}\quad\forall \ell\in\{1,\dots,L\}.\nonumber
\end{align}

Approximate value iteration algorithm recursively computes the value function $J(S)$ for every state $S\in\mathcal{S}$, using a target update step based on the Bellman optimality equation. We define the target update via the following proposition. 

\begin{proposition}
The target update for value iteration is expressed as
\begin{equation}
J(S)=\min_{\vec{\kappa}\in\bar{\Lambda}}\left\lbrace\rho\left(S,\vec{\kappa}\right) + \alpha\!\!\!\sum_{S_{\vec{b}}^+\in\mathcal{S}_{S}^+}\!\!\Prob\left[S_{\vec{b}}^+|S,\vec{\kappa}\right]\tilde{J}(S_{\vec{b}}^+)\right\rbrace,
\label{eqn:vi_update}
\end{equation}
where $\Prob\left[S_{\vec{b}}^+|S,\vec{\kappa}\right]$ is the transition probability from state $S^+$ in one time step using $\vec{\kappa}$ given the transmission outcome vector $\vec{b}=\left[b_1,\dots,b_L\right]^\top, b_\ell\in\{0,1\}$, $\mathcal{S}^+_{S}$ denotes the set of all states reachable from $S$ with one time step transition\footnote{Intuitively, the set $\mathcal{S}_{S}^+$ corresponds to all possible transmission outcomes give the state $S$.}
$\mathcal{S}_{S}^+ \triangleq \left\lbrace S^+_{\vec{b}}=\Phi\left(S,\vec{b}\right)\,\forall\vec{b}\in\mathcal{B}\right\rbrace\label{eqn:reachable_state_space}$, $\mathcal{B}$ is the set of all possible transmission outcomes, i.e. all $L$-permutations of $\{0,1\}$, and $\tilde{J}(S_{\vec{b}}^+)$ denotes $L$-variate interpolation of the true value function $J(S_{\vec{b}}^+)$ on the discrete grid defined by $\mathcal{S}$.
\end{proposition}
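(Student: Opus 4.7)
The plan is to derive the target update by directly applying Bellman's principle of optimality to the infinite horizon discounted cost problem in~\eqref{eq:final_opt}, and then account for the discretization via interpolation. The result is essentially an instance of the standard Bellman equation tailored to the present MDP, so the work is mostly bookkeeping: identifying the transition kernel and handling off-grid next states.

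First, I would invoke the standard result for infinite horizon discounted MDPs with bounded stage cost~\cite{Bertsekas_DP}: the optimal value function satisfies $J(S)=\min_{\vec{\kappa}}\bigl\{\rho(S,\vec{\kappa})+\alpha\,\E[J(\Phi(S,\vec{\beta}))\mid S,\vec{\kappa}]\bigr\}$, with the minimization restricted to the feasible action set, which after discretization becomes $\bar{\Lambda}$. This step relies only on the Markov property of the state evolution $S_{k+1}=\Phi(S_k,\vec{\beta}_k)$ and on the fact that $\vec{\kappa}_k=\E[\vec{\beta}_k]$ decouples the distribution of $\vec{\beta}_k$ from the history.

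Second, I would rewrite the expectation as an explicit sum. Because $\vec{\beta}_k$ is composed of independent Bernoulli components with $\Prob[\beta_{k,\ell}=1]=\kappa_\ell$ and $\Prob[\beta_{k,\ell}=0]=1-\kappa_\ell$, and the next state is the deterministic function $S^+_{\vec{b}}=\Phi(S,\vec{b})$ of the outcome $\vec{b}\in\mathcal{B}$, the transition probability from $S$ to $S^+_{\vec{b}}$ under action $\vec{\kappa}$ is
\begin{equation}
\Prob\left[S_{\vec{b}}^+\mid S,\vec{\kappa}\right]=\prod_{\ell=1}^{L}\kappa_\ell^{b_\ell}(1-\kappa_\ell)^{1-b_\ell}.
\end{equation}
Summing over the $2^L$ outcomes in $\mathcal{B}$ enumerates all states in $\mathcal{S}^+_S$ exactly, so the expectation collapses to $\sum_{S^+_{\vec{b}}\in\mathcal{S}^+_S}\Prob[S^+_{\vec{b}}\mid S,\vec{\kappa}]\,J(S^+_{\vec{b}})$.

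Third, I would address the discretization. The covariance transition $\phi_\ell$ is not closed on the finite grid $\mathcal{S}$, so in general $S^+_{\vec{b}}\notin\mathcal{S}$ and the value $J(S^+_{\vec{b}})$ is not stored. Replacing $J(S^+_{\vec{b}})$ by its $L$-variate interpolation $\tilde{J}(S^+_{\vec{b}})$ on the grid $\mathcal{S}$ yields exactly the update~\eqref{eqn:vi_update}. This substitution is the standard device used to turn the Bellman recursion into an implementable approximate value iteration on $\mathcal{S}$.

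The main obstacle, as I see it, is not the derivation itself, which is essentially a definition plus one application of the tower property, but rather a matter of presentation: being explicit about the independence of the Bernoulli components so that the probability factorizes cleanly, and clearly stating that the interpolation step is an approximation that does not alter the form of the Bellman operator but only how off-grid values are evaluated. Once these two points are made, the equation follows immediately.
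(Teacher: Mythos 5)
Your proposal is correct and follows essentially the same route as the paper: apply the Bellman optimality equation, pull the deterministic stage cost out of the expectation, expand the expectation as a sum over the $2^L$ reachable states indexed by the transmission outcome $\vec{b}$, and replace off-grid values by their interpolation $\tilde{J}$. The one substantive difference is your transition probability: you write the joint probability of the independent Bernoulli outcomes as the product $\prod_{\ell=1}^{L}\kappa_\ell^{b_\ell}(1-\kappa_\ell)^{1-b_\ell}$, whereas the paper writes it as the sum $\sum_{\ell=1}^{L}(1-\kappa_\ell)(1-b_\ell)+\kappa_\ell b_\ell$; your product form is the correct one (the paper's sum does not normalize to one over $\mathcal{B}$ for $L>1$), so on this point your derivation actually improves on the printed proof.
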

\begin{proof}
From the Bellman optimality equation~\cite{Bertsekas_DP} we have
\begin{align}
J(S)&\triangleq\min_{\vec{\kappa}\in\bar{\Lambda}}\E\left[\rho\left(S,\vec{\kappa}\right) + \alpha J(S^+_{\vec{b}})\right]\nonumber\\
&\overset{(1)}{=}\min_{\vec{\kappa}\in\bar{\Lambda}}\rho\left(S,\vec{\kappa}\right) + \alpha \E\left[J(S^+_{\vec{b}})\right]\nonumber\\
&\overset{(2)}{=}\min_{\vec{\kappa}\in\bar{\Lambda}}\rho\left(S,\vec{\kappa}\right) + \alpha\!\!\!\sum_{S_{\vec{b}}^+\in\mathcal{S}_{S}^+}\!\!\Prob\left[S_{\vec{b}}^+|S,\vec{\kappa}\right]\tilde{J}(S_{\vec{b}}^+).
\end{align}
Where in step (1) we used the fact that $\rho\left(S,\vec{\kappa}\right)$ is deterministic, and in step (2) we expanded the expectation and sum over all possible next states defined by $\mathcal{S}^+_{S}$ and respective transition probabilities from $S$ to $S^+_{\vec{b}}$ (note that this set does not depend on $\vec{\kappa}$). 
Following our system model assumptions, transition probabilities are independent given $\vec{\kappa}$ and thus found as
\begin{align}
\Prob\left[S_{\vec{b}}^+|S,\vec{\kappa}\right] &= \Prob\left[\vec{\beta}=\vec{b}\,|\,\vec{\kappa}\right]  \notag\\
&= \sum_{\ell=1}^{L}\left(1-\kappa_{\ell}\right)\left(1-b_{\ell}\right) + \kappa_{\ell}b_{\ell}.
\label{eqn:trans_prob}
\end{align}
\end{proof}

The pseudo-code and the parameters of the resulting value iteration procedure are summarized, respectively, in Alg.~\ref{alg:avi} and Tab.~\ref{tab:list-of-algorithm-symbols}. 
Steps 3-7 of the algorithm compute a single pass over the state space to update $J(S)$. 
The update pass is repeated until convergence, defined in our algorithm by the threshold $\epsilon$.
Finally, once the optimal $J(S)$ is obtained, steps 9-11 retrieve the optimal policy based on the Bellman optimality equation
\begin{equation}
\vec{\kappa^\star}=
\argmin_{\vec{\kappa}\in\bar{\Lambda}}\left\lbrace\rho\left(S,\vec{\kappa}\right) + \alpha\hspace{-.2cm}\sum_{S_{\vec{b}}^+\in\mathcal{S}_{S}^+}\hspace{-.1cm}\Prob\left[S_{\vec{b}}^+|S,\vec{\kappa}\right]\tilde{J}(S_{\vec{b}}^+)\right\rbrace.
\label{eqn:opt_policy}
\end{equation}

 \begin{algorithm}
	\caption{Approximate Value Iteration Algorithm}
	\label{alg:avi}
	\begin{algorithmic}[1]
		\renewcommand{\algorithmicrequire}{\textbf{Input:}}
		\renewcommand{\algorithmicensure}{\textbf{Output:}}
		\REQUIRE States $\mathcal{S}$, actions $\Lambda_{\kappa}$, threshold $\epsilon$
		\ENSURE Optimal deterministic policy $\vec{p^\star}(S),\,\forall S\in\mathcal{S}$
		\\ \textit{Initialisation} : $J\left(S\right)\gets0,\forall S\in\mathcal{S}, \Delta \gets \epsilon$
		\WHILE {$\Delta > \epsilon$}
		\STATE $\Delta\gets0$
		\FOR {$S\in\mathcal{S}$}
		\STATE $J^{-}\gets J(S)$
		\STATE Update $J(S)$ according to~\eqref{eqn:vi_update}.
		\STATE $\Delta \gets \max\{\Delta,|J^{-}-J(S)|\}$
		\ENDFOR
		\ENDWHILE
		\FOR {$S\in\mathcal{S}$}
		\STATE
Compute $\vec{\kappa^\star}$ using~\eqref{eqn:opt_policy}
\STATE $\vec{p^\star}(S)=\vec{\Psi}(\vec{\kappa^\star})$ 
\ENDFOR
		\RETURN $\vec{p^\star}(S),\,\forall S\in\mathcal{S}$
	\end{algorithmic} 
\end{algorithm}

In Fig.~\ref{fig:psr-schedules} the PSR policies are visualized for an exemplary case of two sensors ($L=2$), equal communication distances ($10$ m) and discretizations $M_\mathcal{S}=M_{\bar{\Lambda}}=30$. 
The left pair of plots shows the symmetric PSR policies of two homogeneous systems with $F_1=F_2=1.01$, while the left pair shows asymmetric PSR policies for heterogeneous systems with $F_1=1.01, F_2=1.2$.
\begin{remark}
	The discretization of the state space must take into account, additionally to the algorithm parameters $\lambda_\ell$ and $\alpha$, the covariance dynamics of the systems and the maximum transmission power.
	In fact, the selection of these parameters impacts the values of the target update function in~\eqref{eqn:vi_update}, leading to different types of policies.
	The misconfiguration of them could lead to approximation errors (for large discretization steps) or saturation (for low values of maximum estimation covariance) in the PSR policies.
\end{remark}

The complexity of the approximate value iteration depends on the cardinality of the system state space and action space. 
The state space $\mathcal{S}$ has cardinality $|\mathcal{S}| = M_{\mathcal{S}}^L$. 
The cardinality $M_{\bar{\Lambda}}$ of the action space depends on the discretized feasibility region of the power control problem discussed in~\ref{sec:tpc-problem}, and can vary depending on channel conditions and practical limitations on available power levels~\cite{802154-2006}.
The complexity of value iteration is thus $\mathcal{O}\left(|\mathcal{S}|^2\times|\bar{\Lambda}|\right)=\mathcal{O}\left(M_{\mathcal{S}}^{2L}\times M_{\bar{\Lambda}}\right)$. 

\begin{remark}
Discretization precision for state and action space $M_{\mathcal{S}},M_{\bar{\Lambda}}$, as well as the convergence threshold $\epsilon$, present a natural trade-off between optimality and complexity. 
In practice, they have to be carefully chosen based on the application. If better performance must be achieved and longer run-times are acceptable, higher precision and lower $\epsilon$ should be used. 
\end{remark}
\section{Evaluation}\label{sec:evaluation}
This section presents the remote state estimation performance and the transmission power expenditure of the network coordinated by the MDP problem of Sec.~\ref{sec:opt-problem}.
This is achieved by comprehensively evaluating the approximate algorithm of Sec.~\ref{sec:opt-algorithm} in all its parameters for different interference scenarios and dynamical systems.

The evaluation assumes the following parameters for wireless communication. 
Wireless sensors transmit on channel $26$ of the $2.4$ GHz ISM band, in an indoor propagation environment with parameters $\sigma^2=2.75$ dB and $\eta=3.3$~\cite{zamalloa2007analysis}, and absence of interference from other networks.
For every transmission, sensors select transmission powers $p_k \in \left[-24\text,7\right]$ dBm, and are subject to an AWGN with power $n_k=-100$ dBm. 
The parameters are selected based on the typical operating values of the IEEE Std. 802.15.4~\cite{802154-2006} RF SoC TI-CC2538 deployed in most recent experimental WSN platforms, e.g. Zolertia RE-Mote.
We evaluate different installations of sensors in a factory for two network topologies, \emph{circular} and \emph{assembly-line} shown in Fig.~\ref{fig:topologies} with varying parameters $d_1,d_2$ to investigate different interference scenarios.
The circular topology represents a scenario where the network infrastructure is scarce and multiple sensors (T) transmit to co-located receivers (R).
The assembly-line topology represents a dense factory environment where the sensors and their receivers are uniformly distributed.

We consider two classes of system dynamics.
Systems of class I are with $F_\ell^1=1.01$ and are more stable, while systems in class II are with $F_\ell^{2}=1.1$ and are less stable. 
Both systems are affected by the same system noise $R_{1,l}=0.4$ and their states are measured by the identical sensors with parameters $H_l=0.3$ and $R_{2,l}=1.1$. 
All measurements are transmitted to the respective remote state estimators in packets of $W=120$ bits.

\begin{figure}
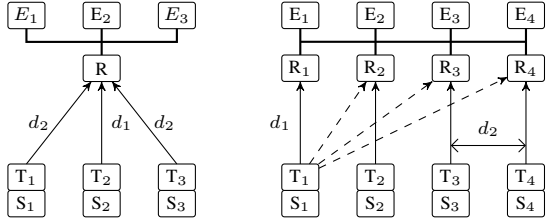

	\includestandalone{images/topologies}
	\caption{Circular (left) and assembly-line (right) topologies representing resp. scarce and dense network infrastructure deployments. Multiple sensors  communicate (T, R) system's (S) measurements to remote state estimators (E) with varying parameters $d_1$, $d_2$.}
	\label{fig:topologies}
\end{figure}

We started by presenting an exhaustive evaluation of the algorithm's sensitivity to design parameters $\alpha$ and $\lambda_\ell=\lambda$ for different interference scenario.
We achieved this by performing Monte Carlo simulations for a network
of three systems of class I in a circular topology (Fig.~\ref{fig:topologies}), and by computing, for every scenario, the total mean network estimation error covariance $\sum_l \bar{P}_\ell$ and transmission power $\sum_l \bar{p}_\ell$.
The two metrics indicate, respectively, the remote state estimation performance and energy consumption of the network.
In each scenario, we calculated the optimal transmission power policies from an action space with $M_{\bar{\Lambda}}=512$ values in the interval $\left(0,1\right)$ 
and a state space with $M_{\mathcal{S}}=10$ in the interval $\left[0,20\right]$. 
The algorithm precision is selected as $\epsilon=0.05$.

\begin{figure}[t!]
	\includegraphics{./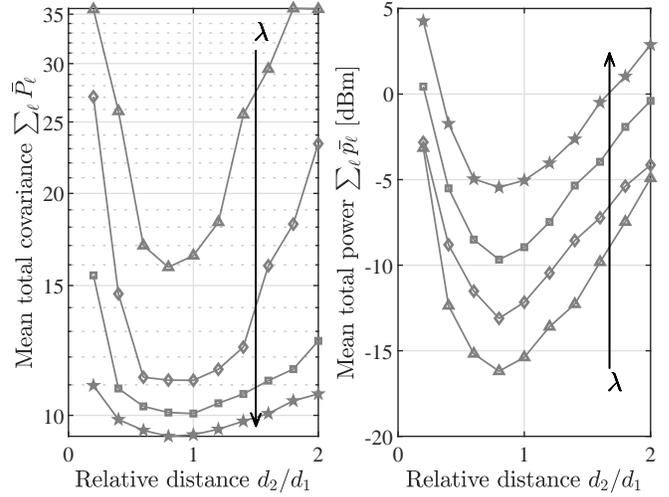}
	\caption{Impact of the relative distance $d_2/d_1$, $d_1=10$ m on the system's performance for 3 sensors in a circular topology for increasing values of $\lambda$ (arrow) and $\alpha=1$.}
	\label{fig:multisim-distances}
\end{figure}

Fig.~\ref{fig:multisim-distances} shows the mean network covariance and transmission power for different relative distances $d_2/d_1$, $d_1=10$ m and $\alpha=1$.
In this configuration of the circular topology (Fig.~\ref{fig:topologies}), all the receivers are placed at the same location, while transmitter 2 is placed at distance $d_1$ and transmitters 1 and 3 at distance $d_2$. 
We observe that, as the difference between sensors' positions decreases, i.e. $d_2/d_1\sim1$, the mean values of the total estimation error covariance and transmission power decrease, reaching their minimum values.
This result shows that mutual interference has a strong influence on the performance of the system, requiring more power when the interference levels are unevenly distributed in the network.
As expected, the parameter $\lambda$, allowing to trade-off the total estimation error over the transmission power expenditure, plays an important role in determining the total system performance and can be set according to use case requirements.
In fact, when $\lambda$ increases, the mean transmission powers increase, decreasing the mean estimation error covariances.

\begin{figure}[t!]
	\includegraphics{./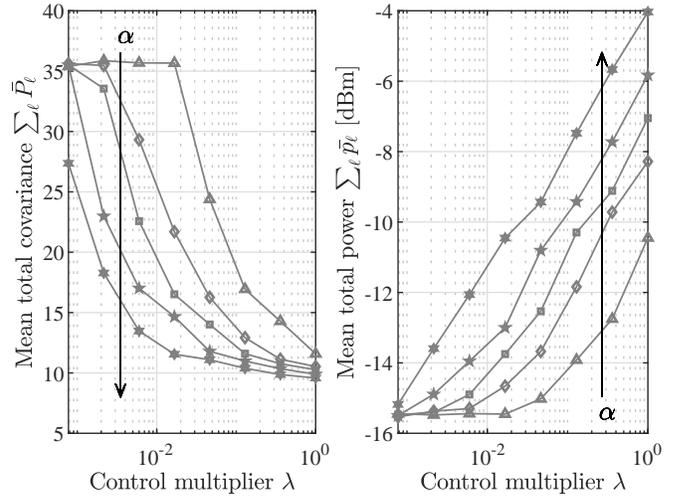}
	\caption{Impact of the trade-off parameter $\lambda$ on the system's performance for 3 sensors in a circular topology with $d_2/d_1=1.2$ and for increasing values of $\alpha$ (arrow).}
	\label{fig:multisim-lambda}
\end{figure}

\begin{figure}[t!]
	\includegraphics{./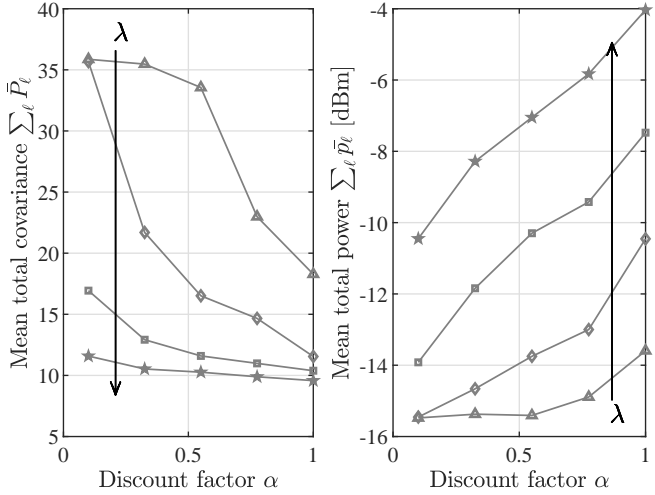}
	\caption{Impact of the discount factor $\alpha$ on the system's performance for 3 sensors in a circular topology with $d_2/d_1=1.2$ and for increasing values of $\lambda$ (arrow).}
	\label{fig:multisim-alpha}
\end{figure}

Fig.~\ref{fig:multisim-lambda} and Fig.~\ref{fig:multisim-alpha} show the interplay between $\lambda$ and $\alpha$ for the fixed distances $d_1=10$ m and $d_2=12$ m.
In all scenarios, as expected, by increasing $\lambda$ and $\alpha$, the transmission powers increase, resulting in lower covariances.
This effect, however, is caused by different reasons.
While $\lambda$ regulates the trade-off between instantaneous values of transmission power and estimation error covariance, $\alpha$ weights the importance of future system performance with respect to current values, where $\alpha\rightarrow1$ indicates more importance of future values.
As shown in Fig.~\ref{fig:multisim-alpha}, lower values of $\alpha$ lead to policies that operate over longer horizons by means of lower transmission powers at the cost of higher estimation error covariances.
From these results, we can conclude that the proposed approximated algorithm flexibly adapts to different network configurations, and arbitrary desired system performance can be achieved with the accurate selection of $\alpha$ and $\lambda$.


\begin{figure}[!t]
	\includegraphics{./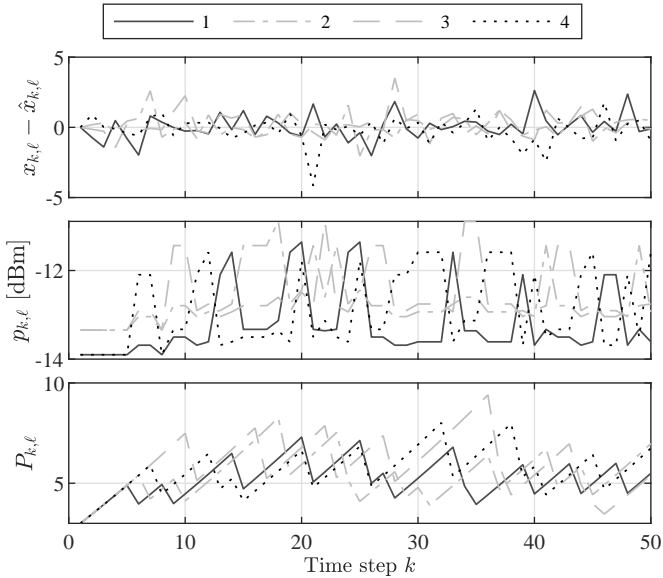}
	\caption{Time evolution of the estimation error (top), transmission power (middle), and estimation error covariance (bottom) for an assembly-line topology of identical 4 systems $d_1=10$ m, $d_2=3.5$ m.}
	\label{fig:simulation_4_homo_users_asymm}
\end{figure}

We continued the evaluation of the algorithm by presenting trajectories of transmission powers, estimation errors, and covariances for two exemplary scenarios.
Fig.~\ref{fig:simulation_4_homo_users_asymm} shows a network of four systems of class I in an assembly-line topology (Fig.~\ref{fig:topologies}) with distances $d_1=10$ m and $d_2=3.5$ m.
The optimal policies are obtained from an action space with $M_{\bar{\Lambda}}=512$ values in the interval $\left(0,1\right)$ 
and a state space with $M_{\mathcal{S}}=8$ in the interval $\left[0,10\right]$. 
The algorithm parameters are $\epsilon=0.05$, $\lambda=0.01$, and $\alpha=0.9$.
From Fig.~\ref{fig:simulation_4_homo_users_asymm}, we can observe that communication links 1,4, placed at the edge of the topology, experience less interference and use lower transmission powers, while internal links 2,3 use higher transmission powers.
As all systems belong to the same class, they all present a similar evolution of the estimation errors and covariances, with slightly higher values for more interfered links.
Also, in this case, we can observe that the interference levels strongly influence the selection of transmission powers and the trajectories of the systems.

\begin{figure}[!t]
	\includegraphics{./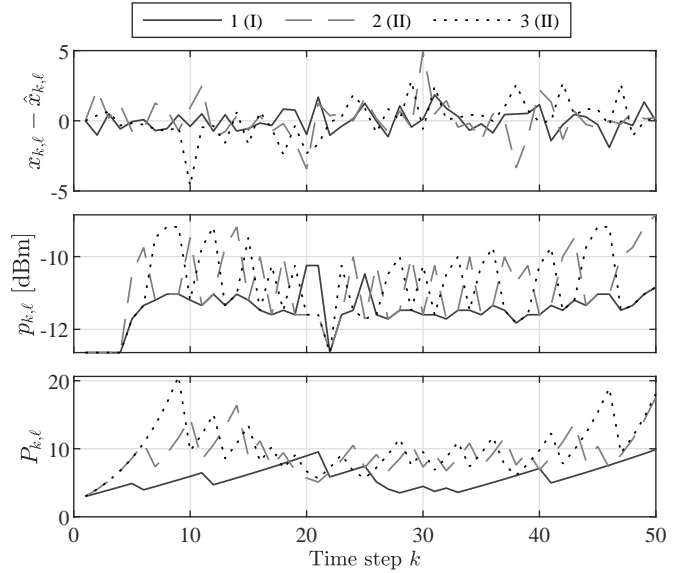}
	\caption{Time evolution of the estimation error (top), transmission power (middle), and estimation error covariance (bottom) for a circular topology of 3 systems, 2 of class II and 1 of class I, and equal distances $d_1,d_2=10$ m.}
	\label{fig:simulation_3_hete_users_symm}
\end{figure}

On the other hand, Fig.~\ref{fig:simulation_3_hete_users_symm} shows the trajectories of a network of heterogeneous systems, one from class I and two from class II.
In this scenario, all nodes have the same communication distance $d_1,d_2=10$ m according to the circular topology (Fig.~\ref{fig:topologies}).
The optimal power policies are obtained from an action space with $M_{\bar{\Lambda}}=8\cdot 10^3$ in the interval $\left(0,1\right)$ 
and a state space with $M_{\mathcal{S}}=20$ in the interval $\left[0,30\right]$. 
The algorithm parameters are $\epsilon=0.05$, $\lambda=0.01$, and $\alpha=0.9$.
In this scenario, the difference between trajectories is higher, leading to higher transmission powers, estimation errors, and covariances for systems of class II and lower for the system of class I.
From all the evaluation scenarios we conclude that the transmit power policies can correctly support different types of system dynamics and topologies, capturing the properties of the systems and optimally distributing network resources accordingly.

\section{Conclusions}\label{sec:conclusions}
In this work, we developed a novel optimal transmission power control policy for industrial WSNs deployed to transmit measurements of multiple independent NCSs' dynamics to remote state estimators.
The policy centrally coordinates the simultaneous access to the shared communication channel by adjusting the transmission powers of the sensors,
and it is obtained by formulating an infinite horizon MDP optimization problem that combines the network's transmission powers and estimation error covariances.
We show that the MDP problem can be solved by independently optimizing the network Packet Success Rates (PSRs) and the minimum transmission powers that achieve them. 
We propose an approximate value iteration algorithm for its implementation in practical scenarios.

Furthermore, we performed an exhaustive evaluation of the algorithm, proving its effectiveness to adapt to different interference scenarios and system's dynamics.
The evaluation results provide a comprehensive characterization of the algorithm's parameters for arbitrary estimation performances and transmission power expenditures, proving that, by varying its main trade-off parameters $\lambda$ and $\alpha$, it is possible to flexibly adapt the algorithm's performance to arbitrary use cases.
Further developments are possible and can investigate distributed implementations or time-varying wireless channels.

\section*{Acknowledgment}
This work has been carried out with the support of the Technical University of Munich - Institute for Advanced Study, funded by the German Excellence Initiative and the DFG Priority Program SPP1914 ``Cyber-Physical Networking'' grant number KE1863/5-1.

\bibliographystyle{IEEEtran}
\bibliography{references,powercontrol}

\begin{thebibliography}{10}
\providecommand{\url}[1]{#1}
\csname url@samestyle\endcsname
\providecommand{\newblock}{\relax}
\providecommand{\bibinfo}[2]{#2}
\providecommand{\BIBentrySTDinterwordspacing}{\spaceskip=0pt\relax}
\providecommand{\BIBentryALTinterwordstretchfactor}{4}
\providecommand{\BIBentryALTinterwordspacing}{\spaceskip=\fontdimen2\font plus
\BIBentryALTinterwordstretchfactor\fontdimen3\font minus
  \fontdimen4\font\relax}
\providecommand{\BIBforeignlanguage}[2]{{%
\expandafter\ifx\csname l@#1\endcsname\relax
\typeout{** WARNING: IEEEtran.bst: No hyphenation pattern has been}%
\typeout{** loaded for the language `#1'. Using the pattern for}%
\typeout{** the default language instead.}%
\else
\language=\csname l@#1\endcsname
\fi
#2}}
\providecommand{\BIBdecl}{\relax}
\BIBdecl

\bibitem{stoccontrol}
K.~J. {\AA}str{\"o}m, \emph{{Introduction to Stochastic Control Theory}}.\hskip
  1em plus 0.5em minus 0.4em\relax Dover Publications, 2006.

\bibitem{Zhang2013}
L.~Zhang, H.~Gao, and O.~Kaynak, ``{Network-Induced Constraints in Networked
  Control Systems—A Survey},'' \emph{IEEE Transactions on Industrial
  Informatics}, vol.~9, no.~1, pp. 403--416, Feb 2013.

\bibitem{Gungor2009}
V.~C. {Gungor} and G.~P. {Hancke}, ``{Industrial Wireless Sensor Networks:
  Challenges, Design Principles, and Technical Approaches},'' \emph{IEEE
  Transactions on Industrial Electronics}, vol.~56, no.~10, pp. 4258--4265, Oct
  2009.

\bibitem{Sinopoli2004}
B.~{Sinopoli}, L.~{Schenato}, M.~{Franceschetti}, K.~{Poolla}, M.~I. {Jordan},
  and S.~S. {Sastry}, ``{Kalman Filtering with Intermittent Observations},''
  \emph{IEEE Transactions on Automatic Control}, vol.~49, no.~9, pp.
  1453--1464, Sep. 2004.

\bibitem{Ren2018}
X.~Ren, J.~Wu, K.~H. Johansson, G.~Shi, and L.~Shi, ``{Infinite Horizon Optimal
  Transmission Power Control for Remote State Estimation over Fading
  Channels},'' \emph{IEEE Transactions on Automatic Control}, vol.~63, no.~1,
  pp. 85--100, 2018.

\bibitem{Wu2015}
J.~Wu, Y.~Li, D.~E. Quevedo, V.~Lau, and L.~Shi, ``{Data-driven Power Control
  for State Estimation: A Bayesian inference approach},'' \emph{Automatica},
  vol.~54, pp. 332--339, 2015.

\bibitem{Wu2017}
J.~Wu, Y.~Li, D.~E. Quevedo, and L.~Shi, ``{Improved Results on Transmission
  Power Control for Remote State Estimation},'' \emph{Systems and Control
  Letters}, vol. 107, pp. 44--48, 2017.

\bibitem{Soleymani2018}
T.~{Soleymani}, S.~{Zoppi}, M.~{Vilgelm}, S.~{Hirche}, W.~{Kellerer}, and J.~S.
  {Baras}, ``{Covariance-Based Transmission Power Control for Estimation over
  Wireless Sensor Networks},'' in \emph{2018 European Control Conference
  (ECC)}, June 2018, pp. 857--862.

\bibitem{Li2013}
Y.~{Li}, D.~E. {Quevedo}, V.~{Lau}, and L.~{Shi}, ``{Online Sensor Transmission
  Power Schedule for Remote State Estimation},'' in \emph{52nd IEEE Conference
  on Decision and Control}, Dec 2013, pp. 4000--4005.

\bibitem{Li2017}
Y.~{Li}, F.~{Zhang}, D.~E. {Quevedo}, V.~{Lau}, S.~{Dey}, and L.~{Shi},
  ``{Power Control of an Energy Harvesting Sensor for Remote State
  Estimation},'' \emph{IEEE Transactions on Automatic Control}, vol.~62, no.~1,
  pp. 277--290, Jan 2017.

\bibitem{Li2014}
Y.~{Li}, D.~E. {Quevedo}, V.~{Lau}, and L.~{Shi}, ``{Multi-Sensor Transmission
  Power Scheduling for Remote State Estimation Under SINR Model},'' in
  \emph{53rd IEEE Conference on Decision and Control}, Dec 2014, pp.
  1055--1060.

\bibitem{Weerakkody2016}
S.~{Weerakkody}, Y.~{Mo}, B.~{Sinopoli}, D.~{Han}, and L.~{Shi},
  ``{Multi-Sensor Scheduling for State Estimation With Event-Based, Stochastic
  Triggers},'' \emph{IEEE Transactions on Automatic Control}, vol.~61, no.~9,
  pp. 2695--2701, Sep. 2016.

\bibitem{Wu2018}
S.~Wu, X.~Ren, S.~Dey, and L.~Shi, ``{Optimal Scheduling of Multiple Sensors
  over Shared Channels with Packet Transmission Constraint},''
  \emph{Automatica}, vol.~96, pp. 22--31, 2018.

\bibitem{Zhang2017}
L.~{Zhang} and J.~{Sun}, ``{Channel Allocation and Power Control Scheme over
  Interference Channels with QoS Constraints},'' in \emph{2017 13th IEEE
  International Conference on Control Automation (ICCA)}, July 2017, pp.
  794--798.

\bibitem{Li2019a}
Y.~Li, A.~S. Mehr, and T.~Chen, ``{Multi-sensor Transmission Power Control for
  Remote Estimation Through a SINR-based Communication Channel},''
  \emph{Automatica}, vol. 101, pp. 78--86, 2019.

\bibitem{Li2018}
Y.~Li, J.~Wu, and T.~Chen, ``{Transmit Power Control and Remote State
  Estimation with Sensor Networks: A Bayesian Inference Approach},''
  \emph{Automatica}, vol.~97, pp. 292--300, 2018.

\bibitem{Li2019}
Y.~Li, C.~S. Chen, and W.~S. Wong, ``{Power Control for Multi-sensor Remote
  State Estimation over Interference Channel},'' \emph{Systems and Control
  Letters}, vol. 126, pp. 1--7, 2019.

\bibitem{Chiang2007}
M.~Chiang, P.~Hande, T.~Lan, and C.~W. Tan, ``{Power Control in Wireless
  Cellular Networks},'' \emph{Foundations and Trends® in Networking}, vol.~2,
  no.~4, pp. 381--533, 2008.

\bibitem{Douros2011}
V.~G. Douros and G.~C. Polyzos, ``{Review of Some Fundamental Approaches for
  Power Control in Wireless Networks},'' \emph{Computer Communications},
  vol.~34, no.~13, pp. 1580--1592, 2011.

\bibitem{Foschini1993}
G.~J. {Foschini} and Z.~{Miljanic}, ``{A Simple Distributed Autonomous Power
  Control Algorithm and Its Convergence},'' \emph{IEEE Transactions on
  Vehicular Technology}, vol.~42, no.~4, pp. 641--646, Nov 1993.

\bibitem{Borbash2006}
S.~A. {Borbash} and A.~{Ephremides}, ``{Wireless Link Scheduling With Power
  Control and SINR Constraints},'' \emph{IEEE Transactions on Information
  Theory}, vol.~52, no.~11, pp. 5106--5111, Nov 2006.

\bibitem{Cruz2004}
R.~L. {Cruz} and A.~V. {Santhanam}, ``{Optimal Routing, Link Scheduling and
  Power Control in Multihop Wireless Networks},'' in \emph{IEEE INFOCOM 2003.
  22nd Annual Joint Conference of the IEEE Computer and Communications
  Societies}, vol.~1, March 2003, pp. 702--711 vol.1.

\bibitem{JianweiHuangMemberIEEERandallA.BerryMemberIEEEandMichaelL.HonigFellow2006}
{Jianwei Huang}, R.~A. {Berry}, and M.~L. {Honig}, ``{Distributed Interference
  Compensation for Wireless Networks},'' \emph{IEEE Journal on Selected Areas
  in Communications}, vol.~24, no.~5, pp. 1074--1084, May 2006.

\bibitem{St.Jean2005}
C.~A. {St Jean} and B.~{Jabbari}, ``{Bayesian Game-theoretic Modeling of
  Transmit Power Determination in a Self-organizing CDMA Wireless Network},''
  in \emph{IEEE 60th Vehicular Technology Conference, 2004. VTC2004-Fall.
  2004}, vol.~5, Sep. 2004, pp. 3496--3500 Vol. 5.

\bibitem{Long2007}
C.~{Long}, Q.~{Zhang}, B.~{Li}, H.~{Yang}, and X.~{Guan}, ``{Non-cooperative
  Power Control for Wireless Ad Hoc Networks with Repeated Games},'' \emph{IEEE
  Journal on Selected Areas in Communications}, vol.~25, no.~6, pp. 1101--1112,
  August 2007.

\bibitem{Sengupta2010}
S.~{Sengupta}, M.~{Chatterjee}, and K.~{Kwiat}, ``{A Game Theoretic Framework
  for Power Control in Wireless Sensor Networks},'' \emph{IEEE Transactions on
  Computers}, vol.~59, no.~2, pp. 231--242, Feb 2010.

\bibitem{Li2007}
Y.~Li and A.~Ephremides, ``{A Joint Scheduling, Power Control, and Routing
  Algorithm for Ad Hoc Wireless Networks},'' \emph{Ad Hoc Networks}, vol.~5,
  no.~7, pp. 959--973, 2007.

\bibitem{802154-2006}
``{IEEE Standard for Information technology-- Local and metropolitan area
  networks-- Specific requirements-- Part 15.4: Wireless Medium Access Control
  (MAC) and Physical Layer (PHY) Specifications for Low Rate Wireless Personal
  Area Networks (WPANs)},'' \emph{IEEE Std 802.15.4-2006 (Revision of IEEE Std
  802.15.4-2003)}, pp. 1--320, Sep. 2006.

\bibitem{802154-2012}
``{IEEE Standard for Local and metropolitan area networks--Part 15.4: Low-Rate
  Wireless Personal Area Networks (LR-WPANs) Amendment 1: MAC sublayer},''
  \emph{IEEE Std 802.15.4e-2012 (Amendment to IEEE Std 802.15.4-2011)}, pp.
  1--225, April 2012.

\bibitem{akyildiz2010wireless}
I.~F. Akyildiz and M.~C. Vuran, \emph{{Wireless Sensor Networks}}.\hskip 1em
  plus 0.5em minus 0.4em\relax John Wiley \& Sons, 2010, vol.~4.

\bibitem{zamalloa2007analysis}
M.~Z.~n. Zamalloa and B.~Krishnamachari, ``{An Analysis of Unreliability and
  Asymmetry in Low-power Wireless Links},'' \emph{ACM Transactions on Sensor
  Networks (TOSN)}, vol.~3, no.~2, Jun. 2007.

\bibitem{Bertsekas_DP}
D.~P. Bertsekas, \emph{Dynamic Programming and Optimal Control, Vol. II},
  3rd~ed.\hskip 1em plus 0.5em minus 0.4em\relax Athena Scientific, 2001.

\end{thebibliography}

\end{document}